\newtheorem{theorem}{\bf Theorem}[section]
\newcommand{\RR}{\mathbb{R}}
\newcommand{\abs}[1]{\left\vert #1\right\vert}
\title{The dynamic pressure in deep-water extreme Stokes waves}
\author{
Tony Lyons}
\address{Department of Computing and Mathematics,\\ Waterford Institute of Technology,\\ Cork Road,  Waterford,\\ Ireland}
\subjclass[2010]{35Q31, 35Q35, 76B99}
\keywords{Euler's equation,  Weak solutions, Maximum principles}
\email{tlyons@wit.ie}
\begin{document}
\maketitle

\begin{abstract}
 In this paper we consider the dynamic pressure in a deep-water extreme Stokes wave. While the presence of stagnation points introduces a number of mathematical complications, maximum principles are applied to analyse the dynamic pressure in the fluid body by means of an excision process. It is shown that the dynamic pressure attains its maximum value beneath the wave crest and its minimum beneath the wave trough, while it decreases in moving away from the crest line along any streamline.
\end{abstract}

\section{Introduction}
The extreme Stokes wave, or the wave of greatest height in Stokes' own terminology, is a steady, periodic, two-dimensional flow, propagating in water of infinite depth or over a flat bed of finite depth. In general, Stokes waves are periodic, steady waves, propagating with fixed velocity. The surface profiles of these waves feature a single crest and trough per period, while the surface height above the mean level is decreasing between a crest and subsequent trough. In the case of irrotational Stokes waves, it is always the case that this surface profile is described by the graph of a function, in that there is no overhang in the profile cf. \cite{Spi1970, Var2007}.  

In contrast to smooth Stokes waves, the wave of greatest height displays some unusual features such as the presence of stagnation points at the wave crests, which arise as peaks with an included angle of $120^\circ$.  Such flows were conjectured by Stokes \cite{Sto1880}, as a limit of regular surface gravity waves, whose existence he deduced from formal asymptotic series applied to the governing equations modelling inviscid, incompressible flows. Stokes conjecture was
proved almost 100 years later \cite{AFT1982}. Let us point out that while numerical simulations in \cite{LGL2002} seem to indicate the existence of two-dimensional irrotational symmetric periodic gravity waves having a stagnation point of the flow field inside the fluid domain and for which the horizontal water velocities near the crest exceed the wave speed, subsequent rigorous analytical considerations show that these waves are a numerical abberation, see \cite{Con2004}. Physically, extreme Stokes waves are unusual in that they are not the most energetic or the most impulsive Stokes wave, nor do they   move with the largest velocity, cf. \cite{SF1982}.

In this work we shall investigate the behaviour of the dynamic pressure induced by a wave of greatest height propagating in a fluid body of infinite depth. The behaviour of the dynamic pressure in smooth Stokes waves has been recently investigated in \cite{Con2016} by means of maximum principles, and the reader is also referred to \cite{SW2016, Var2009} for further recent analytic results relating to smooth and extreme Stokes waves. In contrast to smooth Stokes waves, the presence of stagnation points along the free-surface of the wave of greatest height requires a more intricate application of maximum principles so successfully applied in the aforementioned work. This lack of regularity means the free surface does not satisfy all the necessary criteria for the application of maximum principles in the fluid body. However by means of an excision process it is possible to circumvent these complications, while a limiting argument allows us to extend the results to the entire fluid body, thereby providing the qualitative features of the dynamic pressure in extreme Stokes waves.

The hydrodynamic pressure distribution is composed of contributions from the atmospheric pressure, the hydrostatic pressure and the dynamic pressure generated by the passing wave, and it is this latter contribution which we shall investigate in this work. In the absence of the dynamic pressure contribution, the hydrostatic and atmospheric contributions counteract the fluid inertia, which imposes a state of equilibrium on the fluid body. The hydrodynamic pressure distribution is known to play a vital role in the dynamics of a fluid body. For instance the pressure gradient is known to govern the velocity field and thereby the fluid particle trajectories within the fluid body. In the case of large amplitude surface gravity waves, full nonlinearity of the model must be accounted for, which leads to the surprising result of particle drift along the streamlines of these flows, and we refer to the following \cite{Con2006, CE2007, Hen2006, Hen2008, Con2012,Lyo2014,OVDH2012} for recent results concerning particle drift in Stokes waves, while the work \cite{Qir2017} investigates fluid particle paths subject to Coriolis forces in periodic equatorial waves. Moreover from a theoretical and experimental perspective the pressure induced by a gravity wave is of value when constructing the surface profile of the wave, and the reader is referred to \cite{CC2013,CEH2011,Con2012a,ES2008,Hen2011,Hen2013, Kog2015, HT2017, Bas2017} for further discussion of the recovery of periodic and aperiodic steady surface profiles from pressure data. In addition, in maritime settings where large amplitude waves predominate, an understanding of the hydrodynamic pressure is crucial to the design of cost effective offshore structures \cite{DKD1992}.

\section{The free boundary problem}

In the following we shall consider an extreme Stokes wave of fixed wavelength $\lambda$, propagating with a fixed velocity $c$ with respect to the physical frame with coordinates $(X,Y,Z)$. In practice, such steady waves might arise when wind generated waves enter a remote body of still water, that is to say still with respect to the physical frame coordinates.  The horizontal coordinate $X$ corresponds to the direction of wave propagation, and without loss of generality we may align our axes such that $c>0$. The vertical coordinate is denoted by $Z$ while the time variable is denoted by $t$. Due to the underlying two-dimensionality of the fluid motion, the remaining $Y$-coordinate plays no role in our further considerations. 
\begin{figure}[h!]
\centering
\includegraphics[]{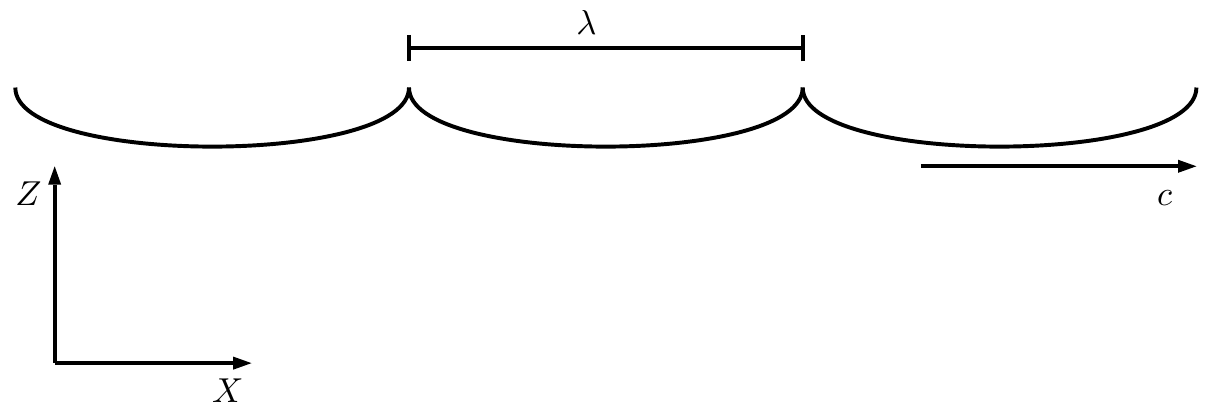}
\caption{A cross-section of a deep-water extreme Stokes' wave, of wavelength $\lambda$, propagating with fixed velocity $c$ in the physical frame with coordinates $(X,Z)$.}
\end{figure}
Being a periodic, steady, two-dimensional flow, without loss of generality we may restrict our considerations to the fluid throughout one period of the flow, whose closure is given by
\begin{equation}\label{eq2.1}
\bar{\Omega }=\left\{(x,y) \in \RR^2: -\frac{\lambda }{2}\leq X\leq \frac{\lambda }{2},\ -\infty<Z\leq \eta (X,t)\right\},
\end{equation}
where $\eta(X,t)$ is the surface height above the mean surface level $Z=0$. The fluid motion is described by a time-dependent two-component vector field $\mathbf{u}=(u,w)$, where $u(X,Z,t)$ denotes the horizontal velocity of the fluid element and $w(X,Z,t)$ denotes its vertical velocity. The horizontal velocity $u(X,Z,t)$ and pressure $P(X,Z,t)$ are periodic in $X$ with period $\lambda$ and are symmetric about any crest line, while the vertical velocity $w(X,Z,t)$ is $\lambda-$periodic in $X$ and anti-symmetric about the crest line. Meanwhile, the free surface profiel $\eta(X,t)$ is periodic in $X$ with period $\lambda$ and is also symmetric about any crest line, and possesses one peak and one trough per period, while it is strictly decreasing between any crest and subsequent trough, cf. \cite{CE2004a, CE2004b, OS2001}.

The governing equations describing the dynamic behaviour of the fluid in the physical frame are given by Euler's equation:
\begin{equation}\label{eq2.2}
\left.
\begin{aligned}
 &u_{t}+uu_X+wu_Z=-\frac{1}{\rho }P_X\\
 &w_{t}+uw_X+ww_Z=-\frac{1}{\rho }P_Z-g
\end{aligned}
\right\}\quad\text{for }(X,Z) \in \Omega,
\end{equation}
where $\rho$ denotes fluid density, while
\begin{equation}\label{eq2.3}
 P=P_{A}-\rho gZ +\mathcal{P}(X,Z,t)
\end{equation}
is the pressure exerted on a fluid element at $(X,Z,t)$. There are two distinct contributions to this pressure: $P_A-\rho gz$ the hydrostatic pressure being a combination of atmospheric pressure ($P_A$) and static fluid pressure ($-\rho gZ$), with the remaining contribution being the dynamic pressure $\mathcal{P}(X,Z,t)$ induced by the passing wave.

While regular waves (that is, steady, periodic waves moving with constant speed) are known to arise in the presence of underlying currents, cf. \cite{CE2011b,Con2011,CE2004b,CE2004a,CV2011,Con2013,CSV2016, TK1997}, in this work we shall restrict our attention to irrotational flows.
The density of water is dependent on the temperature, salinity and pressure exerted on that water body. Variations in water temperature observed in the ocean typically account for variations in observed fluid density of approximately $0.5\%$. Meanwhile, the variation in salinity accounts for a change in fluid density up to $0.2\%$, while density variations of up to $0.2\%$ per kilometre are observed due to pressure variations with fluid depth \cite{Con2011}. Consequently, in the presence of wind generated ocean waves we may reasonably neglect any variation in fluid density, see \cite{Kin1965, Lig1978} for further discussion. Combined with the divergence theorem, this uniform density hypothesis leads to the equation of mass conservation given by
\begin{equation}\label{eq2.4}
 u_X+w_Z=0,\quad\text{for }(X,Z) \in \Omega.
\end{equation}
Moreover, since we shall consider wind generated waves entering a remote area of still water, we do not need to account for the effects of wave-current interactions, which in turn ensures we may treat the flow generated by the passing wave as irrotational. This ensures that the flow satisfies the irrotationality condition
\begin{equation}\label{eq2.5}
 u_{Z}-w_{X}=0,\quad\text{for }(X,Z) \in \Omega
\end{equation}
that is to say, the velocity field $\mathbf{u}$ must be curl-free.   Combined, equations \eqref{eq2.4}--\eqref{eq2.5} also ensure that both $u$ and $w$ are harmonic in the interior $\Omega$.

The governing equations of the fluid motion are also complimented by associated  boundary conditions on the free surface $Z=\eta (X,t)$ and at great depth $Z\rightarrow-\infty$:
\begin{equation}\label{eq2.6}
\begin{aligned}
 &\left.
\begin{aligned}
  & P=P_{A}\\
  & w=\eta_t+u\eta_X\
\end{aligned}\right\}\quad\text{on }Z=\eta \\
 & (u,w)\to(0,0)\quad\text{ as }Z \rightarrow -\infty.
\end{aligned}
\end{equation}
The dynamic boundary condition reflects the decoupling of free surface motion from atmospheric motion. In addition, the kinematic boundary condition imposed on the free surface ensures that fluid particles on the free surface remains there, while the second kinematic boundary condition ensures the fluid is essentially motionless at great depth. The reader is referred to \cite{Con2011} for a comprehensive derivation of the system \eqref{eq2.2}--\eqref{eq2.6} for irrotational two-dimensional steady flows.

In the case of extreme Stokes waves we make the following general observations concerning the velocity field $\mathbf{u}$, the pressure $P$ and the free surface $\eta$, all of which are observed in smooth Stokes waves \cite{CS2010} and in the limit of extreme waves. Each of the variables $u$, $w$ and $P$ depend on $X-ct$ and $Z$, while $\eta$ depends only on $X-ct$. Each of the variables is periodic in $X$, with a common periodicity $\lambda$. The free surface $\eta$, the pressure $P$ and the horizontal velocity $u$ are all symmetric about any crest line while the vertical velocity $w$ is anti-symmetric about any crest line, where a crest line indicates the vertical line issuing from a wave crest into the fluid body.

In contrast to smooth Stokes waves, we no longer assume the solution of the free boundary problem is smooth, since the free surface is no longer differentiable at the wave-crest and the velocity field exhibits stagnation points at the wave crest also. The presence of these stagnation points in the wave profile introduces a number of mathematical complications, which we must circumvent by excising the stagnation point and using a limiting argument. This excision method has been successfully utilised to analyse the hydrodynamic pressure in extreme Stokes waves as well as the particle trajectories along streamlines of the flow, in both the deep-water and shallow-water context, cf. \cite{Con2012, Lyo2014, Lyo2016a, Lyo2016b}.

\section{The fixed boundary problem}
\subsection{The moving frame system}
Since the extreme Stokes wave is a steady waveform moving with fixed phase speed $c$, it is natural to investigate its behaviour in the moving frame coordinates
\begin{equation}\label{eq3.1}
 x=X-ct\qquad z=Z.
\end{equation}
In the moving frame, the solution of the free boundary problem, given by the functions $\{\mathbf{u}(x,z), \eta (x,z), P(x,z)\}$ only depend
on time implicitly via the $x$-coordinate. Under this change of variables the Euler equation becomes
\begin{equation}\label{eq3.2}
\left.
\begin{aligned}
 (u-c)u_x+wu_z&=-\frac{1}{\rho }P_x\\
 (u-c)w_x+ww_z&=-\frac{1}{\rho }P_z-g
\end{aligned}
\right\}\quad\text{ for }(x,z) \in \Omega,
\end{equation}
where $\Omega$ denotes the interior of the fluid domain:
\[\Omega:=\left\{(x,z) \in \RR^2:-\frac{\lambda }{2}<x<\frac{\lambda }{2}, -\infty<z<\eta (x)\right\}.\]
The incompressibility and irrotationality of the flow ensure
\begin{equation}\label{eq3.3}
\left.
\begin{aligned}
  u_x+w_z=0\\
  u_z-w_x=0
\end{aligned}
\right\}\quad\text{ for }(x,z) \in \Omega,
\end{equation}
while the  boundary conditions in the moving frame become
\begin{equation}\label{eq3.4}
\begin{aligned}
&\left.
\begin{aligned}
P&=P_{A}\\
w&=(u-c)\eta_x
\end{aligned}
\right\}\quad\text{ on }z=\eta (x)\\
&\ \  w=0\quad\text{ on }z=0,
\end{aligned}
\end{equation}
which completes the free boundary problem.

Owing to the  periodicity of the extreme Stokes wave solution $\left\{u,w,P,\eta \right\}$, we may restrict our considerations to the portion of the fluid body given by
\begin{equation}\label{eq3.5}
\begin{aligned}
 \Omega_{+}&=\left\{(x,z) \in \RR^2:0<x<\frac{\lambda }{2}, -\infty<z<\eta (x)\right\}\\
 \Omega_{-}&=\left\{(x,z) \in \RR^2:-\frac{\lambda }{2}<x<0, -\infty<z<\eta (x)\right\}.
\end{aligned}
\end{equation}
These regions share a common boundary given by the crest line
\begin{equation}\label{eq3.6}
\{(x,z) \in \RR^2:x=0,-\infty<z\leq\eta (0)\},
\end{equation}
which we abbreviate as $\{x=0\}$. In addition these internal domains are bounded above by the free surface segments
\begin{equation}
\begin{aligned}\label{eq3.7}
S_{+}&=\left\{(x,z) \in \RR^2:0<x<\frac{\lambda }{2}, z=\eta (x)\right\}\\
S_{-}&=\left\{(x,z) \in \RR^2:-\frac{\lambda }{2}<x<0, z=\eta (x)\right\}
\end{aligned}
\end{equation}
while also being bounded laterally by the trough lines
\begin{equation}\label{eq3.8}
\left\{(x,z) \in \RR^2:x=\pm\frac{\lambda }{2},-\infty<z\leq\eta \left(\pm\frac{\lambda }{2}\right)\right\},
\end{equation}
which we abbreviate as $\left\{x=\pm\frac{\lambda }{2}\right\}$.

\subsection{The hodograph transform}
\subsubsection{The stream function}
The incompressibility condition \eqref{eq3.3} ensures the existence of a stream function $\psi (x,y)$ for the flow, defined by
\begin{equation}\label{eq3.9}
 \psi_{x}=-w\qquad \psi_{z}=u-c,
\end{equation}
which combined with irrotationality of the flow immediately implies $\psi$ must be harmonic in $\Omega$.
Integrating, we find
\begin{equation}\label{eq3.10}
 \psi (x,z)=\int_{z_0}^{z}\left(u(x_0,s)-c\right)ds-\int_{x_0}^{x}w(l,z)dl.
\end{equation}
The $\lambda$-periodicity of the velocity field $\mathbf{u}(x,z)$ along the $x$-axis ensures $\psi (x,z)$ shares the same periodicity in this direction. Moreover, by an appropriate choice of integration constant, we may always ensure $\psi (x,\eta )=0$, while $\psi_{z}=u(x,z)-c<0$ everywhere except at $(0,\eta (0))$ ensures $\psi \rightarrow\infty$ as $z\to-\infty$, uniformly in $x$.

Noting this, the free-boundary problem may then be reformulated as a linear elliptic problem for $\psi$ coupled with a nonlinear boundary condition at the free boundary, given by
\begin{equation}\label{eq3.11}
\begin{cases}
\Delta \psi = 0\quad \text{ for }(x,z) \in \Omega\\
\psi=0\quad\text{ on }z=\eta (x)\\
\psi \rightarrow\infty\quad\text{ as }z\to-\infty\\
\frac{1}{2g}\abs{\nabla\psi }^2+z=Q\quad\text{ on }z=\eta (x).
\end{cases}
\end{equation}
The nonlinear boundary condition is a reformulation of the Bernoulli condition (restricted to the free surface), given by
\begin{equation}\label{eq3.12}
 \frac{u^2+w^2}{2g}+z+\frac{P-P_A}{\rho g}=Q\quad\text{ for }(x,z) \in \bar{\Omega },
\end{equation}
which is interpreted as a form of energy conservation throughout the closure of the fluid domain $\bar{\Omega }$. The constant $Q$ is the hydraulic head, and the reader is referred to \cite{Con2011,Con2012,Con2016} for further discussion of its physical significance.

\subsubsection{The velocity potential}
Irrotationality of the flow ensures the velocity field $\mathbf{u}$ may be written in terms of a velocity potential $\varphi$, according to
\begin{equation}\label{eq3.14}
    \varphi_{x}=u-c\quad\varphi_{z}=w,
\end{equation}
and again the fact that the flow is both irrotational and incompressible also ensures the velocity potential is harmonic in the interior of the fluid domain.
Integrating, we find that the velocity potential may be written according to
\begin{equation}\label{eq3.15}
\begin{aligned}
\varphi (x,z)=\int_{x_0}^{x}\left(u(l,z)-c\right)dl+\int_{z_0}^{z}w(x_0,s)ds
\end{aligned}
\end{equation}
where the integration constant is chosen such that $\phi (0,z)=0$. The $\lambda-$periodicity of the vector field also ensures
\begin{equation}\label{eq3.16}
 \varphi (x+\lambda,z)=\varphi (x,z)-c\lambda.
\end{equation}

\subsubsection{The conformal mapping}
Introducing the variables
\begin{equation}\label{eq3.17}
  q=-\phi (x,z)\qquad p=-\psi (x,z),
\end{equation}
the hodograph transform given by
\begin{equation}\label{eq3.18}
\begin{aligned}
 &\mathcal{H}:\bar{\Omega } \rightarrow \bar{\hat{\Omega }}\\
 &\mathcal{H}:(x,z) \mapsto (q,p)
\end{aligned}
\end{equation}
is a conformal mapping (in the interior) of the fluid domain onto a fixed boundary domain.
\begin{figure}\label{eq3.19}
\centering
\includegraphics[width=\textwidth]{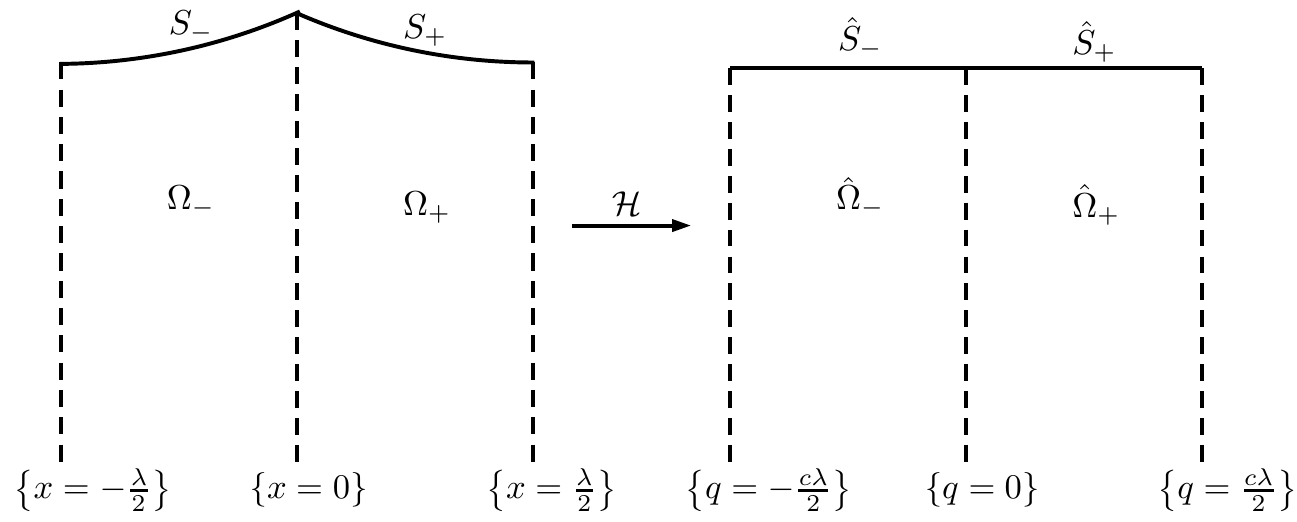}
\caption{The hodograph transform, mapping the free boundary domain $\Omega$ to the fixed boundary domain $\hat{\Omega }$.}
\end{figure}
Under this hodograph transform we  find that the interiors $\Omega_+$ and $\Omega_-$ maps to the interiors
\begin{equation}\label{eq3.20}
\begin{aligned}
\hat{\Omega }_+&=\left\{(q,p) \in \RR^2:0<q<\frac{c\lambda }{2}, -\infty<p<0\right\}\\
\hat{\Omega }_-&=\left\{(q,p) \in \RR^2:-\frac{c\lambda }{2}<q<0, -\infty<p<0\right\}
\end{aligned}
\end{equation}
while the free surface segments $S_+$ and $S_-$ map to the horizontal segments
\begin{equation}\label{eq3.21}
\begin{aligned}
\hat{S}_+&=\left\{(q,p):0<q<\frac{c\lambda }{2}, p=0\right\}\\
\hat{S}_-&=\left\{(q,p):-\frac{c\lambda }{2}<q<0, p=0\right\}.
\end{aligned}
\end{equation}
The crest-line $\{x=0\}$ and trough-lines $\left\{x=\pm \frac{\lambda }{2}\right\}$ map to the vertical line segments
\[\{(q,p):q=0,-\infty<p\leq0\}\quad\text{and}\quad\left\{\left(q,p\right):q=\pm\frac{c\lambda }{2},-\infty<p\leq 0\right\},\]
which we abbreviate as $\{q=0\}$ and $\left\{q=\pm\frac{c\lambda }{2}\right\}$ respectively.

Introducing the mapping
\begin{equation}\label{eq3.22}
\begin{aligned}
 &h:\bar{\hat{\Omega }}\to\bar{\Omega }\\
 &h:(q,p) \mapsto y,
\end{aligned}
\end{equation}
we note that $h$ is manifestly harmonic in the fluid domain $\Omega,$ and since the hodograph transform is conformal in the interior, it follows that $h$ is harmonic in the conformal domain $\hat{\Omega }$, cf. \cite{Fra2000}. The boundary value problem given by equation \eqref{eq3.11} may be written as a fixed boundary problem in the conformal domain according to
\begin{equation}
\begin{cases}
\Delta_{q,p}h=0 \quad\text{ for }(q,p) \in \hat{\Omega }_{-} \cup \hat{\Omega }_+\\
2(Q-P_{A}-gh)(h_{q}^2+h_{p}^2)\quad\text{ for }p=0\\
\nabla h\to\left(0,\frac{1}{c}\right)\text{ uniformly in $q \in \left(-\frac{c\lambda }{2},\frac{c\lambda }{2}\right)$ as $p\to-\infty$.}
\end{cases}
\end{equation}
This reformulation of the free boundary problem is valid in the interior $\hat{\Omega }$ and by Carath\'{e}odery's theorem cf. \cite{Pom1992} the problem has a continuous extension to the boundary $\partial\hat{\Omega }$, see \cite{Con2012} for further discussion. Nevertheless, the lack of regularity at the wave crest $(q,p)=(0,0)$ means the function $h$ is not differentiable nor is the hodograph transform invertible there.
This is further illustrated by the following transformations between the bases $\{\partial_x,\partial_z\}$ and $\{\partial_q,\partial_p\}$:
\begin{equation}\label{eq3.23}
\begin{aligned}
&\left(
\begin{matrix}
  \partial_x\\
  \partial_z
\end{matrix}
\right) = \left(
	  \begin{matrix}
	    -(u-c)&w\\
	    -w&-(u-c)
	  \end{matrix}
	  \right)\left(
		  \begin{matrix}
		   \partial_q\\
		   \partial_p
		  \end{matrix}
		 \right)\\
&\left(
\begin{matrix}
  \partial_q\\
  \partial_p
\end{matrix}
\right) = \frac{1}{(u-c)^2+w^2}\left(
	  \begin{matrix}
	    -(u-c)&-w\\
	    w&-(u-c)
	  \end{matrix}
	  \right)\left(
		  \begin{matrix}
		   \partial_x\\
		   \partial_z
		  \end{matrix}
		 \right)
\end{aligned}
\end{equation}
where we observe that $\{\partial_q,\partial_p\}$ are not defined at the stagnation point.

\section{The dynamic pressure}\label{s4}
\subsection{The uniform limiting process}
The following nonlinear integral equation due, to Nekrasov \cite{Nek1920}, may be derived from the Bernoulli condition for a solution of the free boundary problem in equation \eqref{eq3.11} (see \cite{Mil2011}, pp 409--413 for a full derivation)
\begin{equation}\label{eq4.1.1}
\theta (s)=\frac{1}{3\pi }\int_{-\pi }^{\pi }\sum_{n=1}^{\infty}\frac{\sin(nr)\sin(ns)}{n}\frac{\sin(\theta (t))}{\frac{1}{\mu }+\int_{0}^{t}\sin(\theta (\sigma ))d\sigma }dt,\sigma \in\mathbb{C},
\end{equation}
and was used in \cite{Tol1978} to prove the existence of the wave of greatest height.  The function $\theta(s)$ is interpreted as the local slope of the free surface solution of the system \eqref{eq3.11}, namely
\begin{equation}\label{eq4.1.2}
\theta (s)=\arctan(\eta^{\prime}(x)),\quad s\in[-\pi,\pi],
\end{equation}
while the parameter $s$ is interpreted as follows: Under the action of a second hodograph transform, the point $(x,y) \in \bar{\Omega }$ is mapped to a point in the punctured unit-disc $\sigma=\rho e^{is} \in \mathbb{D}$, where $\rho=1$ corresponds to the free surface, while $\rho=0$ corresponds to the point $y=-\infty$. The parameter $\mu$ is given by
\begin{equation}\label{eq4.1.3}
\mu=\frac{3g\Lambda c}{2\pi U_0},\quad U_0=\abs{\nabla\psi (0,\eta (0))}.
\end{equation}
Conversely, given a solution to  equation \eqref{eq4.1.1} for $\mu>3$, then it can be shown to correspond to a deep-water solution of \eqref{eq3.11} for periodic, irrotational, inviscid flows.

In particular, the case $\mu \to\infty$ corresponds to $U_0\to0$, that is to say, a solution in which a stagnation point occurs at the wave-crest. Given an unbounded sequence $\left\{\mu_n\right\}$ with corresponding solutions  $\left\{\theta_n(s)\right\}$, it may be shown that  $\theta_n(s)$ converge weakly to some $\theta(s) \neq0$, with respect to the $L_{2}([0,2\pi])$ norm. Moreover, by means of the dominated convergence theorem, it may be shown that $\theta_n(s)$ converges strongly to $\theta (s)$ with respect to the same norm. As such, the existence of a solution for $\mu \to\infty$ or $U_0\to0$ has been shown to exist, cf. \cite{Tol1978}. Finally, it may be shown that the strong limit $\theta_n(s) \rightarrow \theta(s)$ satisfies the following:
\begin{enumerate}
 \item The Fourier series of $\theta (s)$ is given by\\
 \[\theta (s)=-\frac{1}{3\pi }\sum_{n=1}^{\infty}\left(\int_{-\pi }^{\pi }\cos(nt)\left[\ln\int_{0}^{t}\sin\theta (\omega )d\omega \right]dt\right)\sin(ks)\]
 \item This Fourier series is uniformly convergent for $s \in (0,\pi]$, and as such $\theta (s)$ is continuous on $[-\pi,0) \cup (0,\pi]$
 \item $\theta (s)$ is discontinuous at $s=0$.
\end{enumerate}
In this manner the wave of greatest height is realised as a uniform limit of regular Stokes, and in particular this allows us to deduce several features of the extreme wave from the corresponding behaviour of smooth surface gravity waves. Extending these methods, the work \cite{AFT1982} confirmed the surface profile is analytic at every point away from the wave-crest, which was also shown to have an included angle $\frac{2\pi }{3}$, that is to say $\displaystyle{\lim_{s\to0^+}}\theta (s)=\frac{\pi }{3}$. The remaining aspect of Stokes conjecture, namely the convexity of the surface profile $(x,\eta (x))$ between successive wave crests, was proven in \cite{PT2004}, and the reader is referred to \cite{BT2003,Tol1996} for further discussion relating to the existence and regularity of Stokes waves of extreme form.

\subsection{Main result}

Equation \eqref{eq2.3} written with respect to the moving-frame coordinates yields the following form for the dynamic pressure
\begin{equation}\label{eq4.1}
    \mathcal{P}(x,z)= P(x,z)-(P_A-\rho gz),
\end{equation}
the latter terms on the right hand side above corresponding to the hydrostatic pressure at fixed depth $z$. Recalling equation \eqref{eq3.12}, Bernoulli's condition along the free surface combined with the dynamic boundary condition $P(0,\eta (0))=P_A$, ensures that
\begin{equation}\label{eq4.3}
 \rho g\eta (0)+P_A=Q.
\end{equation}
Hence, the dynamic pressure may be reformulated as
\begin{equation}\label{eq4.4}
 \mathcal{P}(x,z)=\rho g\eta (0)-\frac{\rho }{2}\left((u(x,z)-c)^2+w(x,z)^2\right).
\end{equation}
Using this form of the dynamic pressure we now prove the main result of this paper:

\begin{theorem}\label{thm1}
In an extreme Stokes wave, the dynamic pressure attains its maximum value beneath the wave crest while its minimum value occurs beneath the wave trough.
\end{theorem}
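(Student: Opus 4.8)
The plan is to exploit the representation \eqref{eq4.4}, which shows that $\mathcal{P}$ is an affine function of the moving-frame speed
\[ M := (u-c)^2 + w^2 = \abs{\nabla\psi}^2, \]
so that maximising $\mathcal{P}$ is the same as minimising $M$, and minimising $\mathcal{P}$ the same as maximising $M$. Since the flow is irrotational and incompressible, $\psi$ (and each of $u-c$ and $w$) is harmonic, whence $M=\abs{\nabla\psi}^2$ is subharmonic in $\Omega$; equivalently, writing the complex velocity $W'=(u-c)-iw$, one has $M=\abs{W'}^2$ and $\log M$ is harmonic away from stagnation points. I would reduce to the half-period cell $\Omega_+$ using the symmetry of $u$ and $P$ and the antisymmetry of $w$ about the crest and trough lines, on both of which $w=0$.

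For the maximum of $\mathcal{P}$ the argument is short. We have $M\ge 0$ throughout $\bar{\Omega}$, and $M$ vanishes precisely at the crest stagnation point $(0,\eta(0))$, where $u=c$ and $w=0$. Away from that point $M>0$: on the free surface the Bernoulli relation in \eqref{eq3.11} together with \eqref{eq4.3} gives $M=2g(\eta(0)-\eta(x))>0$, at great depth $M\to c^2>0$, and in the interior $M>0$ because there are no interior stagnation points. Hence \eqref{eq4.4} yields $\mathcal{P}\le \rho g\eta(0)=\mathcal{P}(0,\eta(0))$, with equality only at the crest, which locates the maximum beneath (indeed, at) the crest.

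For the minimum of $\mathcal{P}$ I would instead maximise the subharmonic function $M$ over $\Omega_+$. By the maximum principle its supremum is attained on $\partial\Omega_+$; a Phragm\'en--Lindel\"of argument is needed because $\Omega_+$ is unbounded below, but $M$ is bounded and $M\to c^2$ uniformly as $z\to-\infty$. The candidate boundary values are $M=2g(\eta(0)-\eta(x))$ on the free surface $S_+$, which increases monotonically from $0$ at the crest to $2g(\eta(0)-\eta(\tfrac{\lambda}{2}))$ at the trough since $\eta$ is strictly decreasing; $M\to c^2$ at great depth; and $M=(u-c)^2$ on the crest and trough lines. Comparing these, the supremum equals $\max\{\,2g(\eta(0)-\eta(\tfrac{\lambda}{2})),\,c^2\,\}$, and it is realised at the trough point provided
\[ c^2 < 2g\bigl(\eta(0)-\eta(\tfrac{\lambda}{2})\bigr), \]
which is equivalent to $u<0$ at the trough surface; supplying this comparison is where the quantitative features of the extreme wave must enter.

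The chief technical obstacle is the lack of regularity at the crest: the $120^\circ$ stagnation corner means the free surface is not $C^1$ there, the hodograph map degenerates, and $\log M\to-\infty$, so the standard maximum-principle and Hopf machinery cannot be applied directly on $\bar{\Omega}_+$. Following the excision strategy, I would remove a small sector $B_\varepsilon$ about the crest, run the maximum principle on the regular domain $\Omega_+\setminus B_\varepsilon$, and then let $\varepsilon\to 0$; since $M\to 0$ at the crest this limit is harmless for the maximum of $\mathcal{P}$ and merely recovers the crest value, while for the minimum it confines attention to the remaining smooth boundary. The second obstacle is genuinely showing the supremum of $M$ is realised at the trough rather than escaping to $z=-\infty$, that is, verifying the displayed inequality $c^2<2g(\eta(0)-\eta(\tfrac{\lambda}{2}))$; I expect this, rather than the maximum-principle bookkeeping, to be the crux of the argument.
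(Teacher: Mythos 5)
Your handling of the \emph{maximum} is correct, and it is genuinely more elementary than the paper's route: since $M=(u-c)^2+w^2\ge 0$ vanishes only at the crest stagnation point (the paper itself records that $\psi_z=u-c<0$ everywhere except at $(0,\eta(0))$, and Bernoulli together with \eqref{eq4.3} rules out surface stagnation away from the crest), the representation \eqref{eq4.4} gives $\mathcal{P}\le\rho g\eta(0)$ with equality exactly at the crest, no maximum principle needed. The paper instead reaches this through monotonicity along the crest line and streamlines plus an interior no-extrema argument for $\mathcal{P}$ in the excised conformal domain; your observation short-circuits all of that for this half of the statement.

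The \emph{minimum} is where your proposal has a genuine gap, and the step you defer is precisely the content of the paper's proof. You reduce the claim to the inequality $c^2<2g\bigl(\eta(0)-\eta(\tfrac{\lambda}{2})\bigr)$, equivalently $u<0$ at the trough, and stop there. This cannot be waved through as a "quantitative feature": if the inequality went the other way, the supremum of $M$ would equal $c^2$ and be approached only as $z\to-\infty$, so the minimum of $\mathcal{P}$ would not be attained anywhere and the theorem would be false as stated. The missing ingredient is exactly the paper's central device: $w$ is harmonic, positive in the half-period conformal domain (kinematic boundary condition plus the maximum principle), and vanishes on the lateral sides; Hopf's boundary-point lemma, applied in the excised domain $\hat{\Omega}_{\varepsilon}^{+}$ so that the crest corner does not obstruct it, gives a strict one-signed $w_q$ along the trough line, and then the basis change \eqref{eq3.23} with irrotationality ($u_z=w_x$) yields $u_z=(c-u)\,w_q$ of one sign there. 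Since $u\to 0$ as $z\to-\infty$, integration forces $u<0$ beneath the trough, hence $(c-u)^2>c^2$ along the whole trough line, and Bernoulli at the trough converts this into your displayed inequality; it moreover shows $M$ increases monotonically up the trough line to the value $2g\bigl(\eta(0)-\eta(\tfrac{\lambda}{2})\bigr)$, locating the minimum of $\mathcal{P}$ at the trough. A secondary gap of the same nature: your assertion that the boundary supremum of $M$ is $\max\bigl\{2g(\eta(0)-\eta(\tfrac{\lambda}{2})),\,c^2\bigr\}$ tacitly assumes $M$ takes no larger value at an interior point of the lateral edges; this needs either the same Hopf-lemma monotonicity or a periodicity/reflection remark (the trough line is interior to the periodically extended domain, so a maximum of the subharmonic $M$ there is excluded by the strong maximum principle). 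So your framework is sound, but the crux you flagged is not an optional refinement — it is the theorem, and it is exactly what the paper's excision-plus-Hopf argument supplies.
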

\begin{proof}
Since $w$ is anti-symmetric in the case of case of regular almost extreme Stokes waves, cf. \cite{CS2010}, it follows that in the uniform limit of the extreme Stokes wave this property of $w$ is preserved. Thus, for the extreme wave, we find that $w=0$ along the crest line  $\{x=0\}$, while periodicity in the $x$-variable also ensures $w=0$ along the trough line $\left\{x=\frac{\lambda}{2}\right\}$. In the conformal domain $\hat{\Omega }^{+}$ this corresponds to $w=0$ along $\{q=0\}$ and $\left\{q=\frac{c\lambda}{2}\right\}$.

Since the boundary $\partial\Omega_{\varepsilon}^{+}$ is not regular at the wave crest, nor is the velocity field $\mathbf{u}$ differentiable there, this  means the requisite conditions for the application of maximum principles are not in place, see \cite{GT2001} for further discussion. However, since the velocity field and free surface are real-analytic at every point away from the wave crest, we may circumvent this issue by applying maximum principles in the excised conformal domain  $\Omega_{\varepsilon}^{+}$,  cf. Figure \ref{fig3}. This is the domain obtained from $\hat{\Omega}^+$ by cutting a quarter disc of radius $\varepsilon>0$ centered at the wave-crest $(q,p)=(0,0)$, and whose boundary $\partial\hat{\Omega}_{\varepsilon}^{+}$ is regular.

Thus in the excised conformal domain $w$ is harmonic in the interior $\hat{\Omega }_{\varepsilon }^{+}$ and continuous along the boundary $\partial{\hat{\Omega}}_{\varepsilon}^+$ (indeed $w$ is analytic along the boundary), while it attains its minimum value along the lateral edges. Hence, the strong maximum principle ensures $w_{q}(0,p)>0$ for $p \in (-\infty,-\varepsilon )$ and $w_{q}\left(\frac{c\lambda }{2},p\right)<0$ for $p \in (-\infty,0).$ Since $\varepsilon>0$ was chosen arbitrarily, we deduce that $w_q(0,p)=0$ for $p\in(-\infty,0)$. Consequently, since $w$ vanishes along the lateral edges of the fluid domain $\Omega_+$ and the conformal domain $\hat{\Omega }_+$, the irrotationality condition along with the linear system \eqref{eq3.23} ensure
\begin{equation}\label{eq4.5}
\begin{cases}
  u_{z}<0\text{ along } \{(x,z) \in \RR^2:x=0,-\infty<z<\eta (0)\},\\
  u_{z}>0\text{ along } \left\{(x,z) \in \RR^2:x=\frac{\lambda }{2},-\infty<z\leq\eta \left(\frac{\lambda }{2}\right)\right\},
\end{cases}
\end{equation}
having used $u-c<0$ for every point away from the wave crest. Differentiating equation \eqref{eq4.4} with respect to $z$, and using $w=0$ along the lateral edges and $u-c<0$ away
from the wave crest, it follows that
\begin{equation}\label{eq4.6}
\begin{aligned}
  \mathcal{P}_{z}&<0\text{ along } \{(x,z) \in \RR^2:x=0,-\infty<z<\eta (0)\},\\
  \mathcal{P}_{z}&>0\text{ along } \left\{(x,z) \in \RR^2:x=\frac{\lambda }{2},-\infty<z\leq\eta \left(\frac{\lambda }{2}\right)\right\}.
\end{aligned}
\end{equation}
\begin{figure}
\centering
\includegraphics{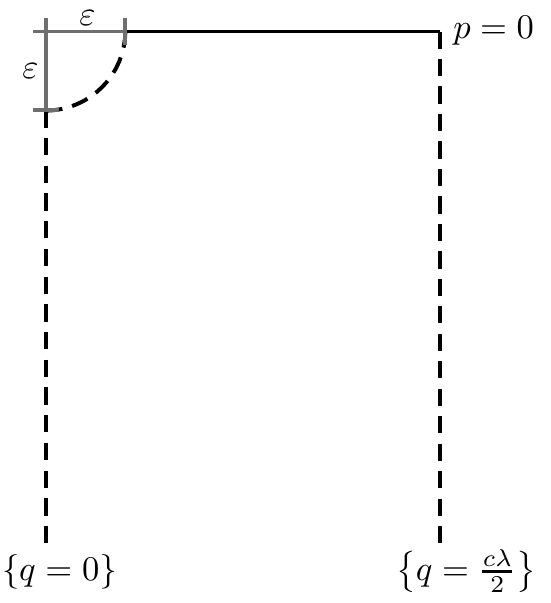}
\caption{The excised conformal domain $\hat{\Omega}_{\varepsilon}^{+}$.}\label{fig3}
\end{figure}
Given any streamline in the interior
$\left\{(x,z(x)): x \in \left(0,\frac{\lambda }{2}\right)\right\}\subset\Omega_+,$
differentiating the dynamic pressure along such a streamline we find
\begin{equation}\label{eq4.7}
 \mathcal{P}_{x}(x,z(x))=\frac{(u(x,z(x))-c)^2+w(x,z(x))^2}{c-u(x,z(x))}u_{x}(x,z(x)).
\end{equation}
It is known that $u_{x}$ is strictly decreasing along any streamline when moving between any crest line and subsequent trough line (see \cite{Con2012, Lyo2014} for further discussion) and since $c-u>0$ away from the wave-crest, it follows that
\begin{equation}\label{eq4.8}
  \mathcal{P}_x(x,z(x))<0 \quad \text{for }x \in (0,\pi ).
\end{equation}
That is to say, the dynamic pressure is strictly decreasing along any streamline as we move between a crest line and subsequent trough line.

Along the free boundary $(x,\eta(x))$ we have
\begin{equation}\label{eq4.9}
  P_{x}(x,\eta (x))\leq 0 \text{ for }x \in \left[0,\frac{\lambda }{2}\right],
\end{equation}
with the pressure gradient being identically zero only at the wave crest and wave trough, cf. \cite{Lyo2016a,Lyo2016b}. In relation to the dynamic pressure, equation \eqref{eq4.1} then ensures
\begin{equation}\label{eq4.10}
  \mathcal{P}_{x}(x,\eta (x))=P_{x}(x,\eta (x))+\rho g\eta^{\prime}(x).
\end{equation}
Since the surface profile $(x,\eta (x))$ is convex for $x\in[0,\lambda]$ we have $\eta^{\prime}(x)<0$ for $x \in (0,\pi )$. Moreover, considering only the right hand derivative of $\eta$ at $x=0$ we have $\displaystyle{\lim_{x\to0^{+}}}\eta^{\prime}(x)=-\frac{1}{\sqrt{3}}$ at the wave crest, while $\eta^{\prime}\left(\frac{\lambda }{2}\right)=0$ at the wave trough. Hence $\mathcal{P}(x,\eta (x))$ is decreasing along the free surface between the wave crest and the wave trough.

\begin{figure}[h!]
\centering
\includegraphics{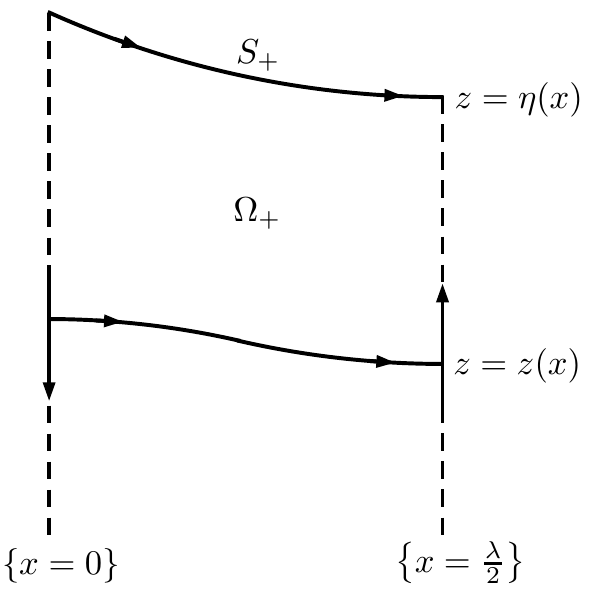}
\caption{The dynamic pressure gradient in a deep-water extreme Stokes wave. The direction of the arrows indicate the direction in which the dynamic pressure is decreasing.}\label{fig4}
\end{figure}

Since the interior $\Omega^+$ maps to the interior of $\hat{\Omega }^+$ under the hodograph transform, if we can show that the dynamic pressure $\mathcal{P}$ attains no extreme values in the interior of the conformal domain, then it follows that the dynamic pressure does not possess any extrema within the interior of the fluid domain. Within the interior of the conformal domain the dynamic pressure is given by
\begin{equation}
  \mathcal{P}(q,p)=\rho gh(0,0)-\frac{\rho }{2}\left((u(q,p)-c)^2+w(q,p)^2\right).
\end{equation}
Since the hodograph transform is conformal and the velocity field is harmonic in the interior of the fluid domain, it follows the velocity field is also harmonic in the interior of the conformal domain, see \cite{Fra2000}. Combined with the incompressibility condition and the irrotationality condition, it follows that
\begin{equation}\label{eq4.11}
\left.
\begin{aligned}
&\Delta \mathcal{P}+b_{1}(q,p)\mathcal{P}_{q}+b_{2}(q,p)\mathcal{P}_p=0\\
&b_{1}(q,p)=\frac{\mathcal{P}_{q}}{\rho \left((u(q,p)-c)^2+w(q,p)^2\right)}\\
&b_{2}(q,p)=\frac{\mathcal{P}_{p}}{\rho \left((u(q,p)-c)^2+w(q,p)^2\right)}.
\end{aligned}
\right\}\quad \text{for }(q,p) \in \hat{\Omega }^{+}
\end{equation}
Crucially, since $(u(q,p)-c)^2+w(q,p)^2>0$ in the interior of this excised conformal domain, this ensures the coefficients $b_1(q,p)$ and $b_2(q,p)$ remain bounded  in the interior of $\hat{\Omega }_{\varepsilon }^{+}$. Combined with the continuity of $\mathcal{P}$ along the boundary $\partial\hat{\Omega }_{\varepsilon }^{+}$, we may now apply maximum principles to the linear elliptic problem in equation \eqref{eq4.11}.
Indeed, Hopf's maximum principle now ensures that unless $\mathcal{P}$ is constant throughout the excised conformal domain $\hat{\Omega }_{\varepsilon }^{+}$, it cannot attain extreme values in the interior of this domain, cf. \cite{Con2011} Chapter 3. Since the parameter $\varepsilon>0$ is arbitrary, it follows that the dynamic pressure never attains its extreme values in the interior of the conformal domain $\hat{\Omega }^+$, and so the dynamic pressure $\mathcal{P}$ cannot attain an extreme value in the interior of the fluid domain. Combined with the behaviour of this dynamic pressure on the boundary $\partial\Omega^+$ elucidated above, we ensure Theorem \ref{thm1} is valid. In Figure \ref{fig4} the qualitative features of the dynamic pressure gradient are illustrated. This conforms with the behaviour of the dynamic pressure in smooth Stokes waves propagating over a flat bed of finite depth, cf. \cite{Con2016}.
\end{proof}

\section*{Acknowledgements}
The author is grateful to the referees for several helpful comments.

\end{document}